\newtheorem{theorem}{\underline{Theorem}}
\newcommand{\Rmnum}[1]{\expandafter\@slowromancap\romannumeral #1@}
\begin{document}
\title{Spectrum-Power Trading for  Energy-Efficient \\Small Cell }

\author{\IEEEauthorblockN{Qingqing Wu\IEEEauthorrefmark{1}\IEEEauthorrefmark{2}, Geoffrey Ye Li\IEEEauthorrefmark{1}, Wen Chen\IEEEauthorrefmark{2}, and Derrick Wing Kwan Ng\IEEEauthorrefmark{3}}
\IEEEauthorblockA{\IEEEauthorrefmark{1} School of Electrical and Computer Engineering, Georgia Institute of Technology, USA.\\}
\IEEEauthorblockA{\IEEEauthorrefmark{2}Department of Electronic Engineering, Shanghai Jiao Tong University, Shanghai, China.\\ }
\IEEEauthorblockA{\IEEEauthorrefmark{3}School of Electrical Engineering and Telecommunications, The University of New South Wales, Australia.\\
Emails: \{qingqing.wu, liye\}@ece.gatech.edu, wenchen@sjtu.edu.cn, w.k.ng@unsw.edu.au}}


\maketitle
\begin{abstract}
This paper investigates spectrum-power trading between a small cell (SC) and a macro-cell (MC), where the SC consumes power to serve the macro-cell users (MUs) in exchange for some bandwidth from the MC.
Our goal is to maximize the system energy efficiency (EE) of the SC while guaranteeing the quality of service (QoS) of each MU as well as small cell users (SUs). Specifically, given the minimum data rate requirement and the bandwidth provided by the MC, the SC jointly optimizes MU selection, bandwidth allocation, and power allocation while guaranteeing its own minimum required system data rate. The problem is challenging due to the binary MU selection variables and the fractional form objective function. We first show that in order to achieve the maximum system EE, the bandwidth of an MU is shared with at most one SU in the SC. Then, for a given MU selection, the optimal bandwidth and power allocations are obtained by exploiting the fractional programming. To perform MU selection, we first introduce the concept of trading EE.
Then, we reveal a sufficient and necessary condition for serving an MU without considering the total power constraint and the minimum data rate constraint.
 Based on this insight, we propose a low computational complexity MU selection algorithm.
 Simulation results demonstrate the effectiveness of the proposed algorithms.

\end{abstract}
\renewcommand{\baselinestretch}{0.96}
\normalsize
\section{Introduction}
The fifth generation (5G) mobile networks are expected to provide ubiquitous ultra-high data rate services and seamless user experience across the whole communication system \cite{hu2014energy}. The concept of small cell (SC) networks, such as femtocells, has been recognized as a key technology that can significantly enhance the performance of 5G networks. The underlay SCs enable the macro-cells (MCs) to offload huge volume of data and large numbers of users. In particular, the SC could help to serve some macro-cell users (MUs) with high required data rate, especially when these MUs are far away from the MC base station (BS), e.g. cell edge users.
 Although the MUs offloading reduces the power consumption of MCs, additional power consumption is imposed to SCs that may degrade the quality of services (QoS) of small cell users (SUs).  Therefore, motivating the SC  to serve MUs  is a critical problem,  especially when the SC BS does not belong to the same mobile operator with the MC BS.

 In recent years, the explosive growth of data hungry applications and various services has triggered a dramatic increase in energy consumption of wireless communication systems. Due to rapidly rising energy costs and tremendous carbon footprints \cite{ng2012energy1}, energy efficiency (EE), measured in bits-per-joule, has attracted considerable attention as a new performance metric in both academia and industry \cite{ng2012energy1,sun2013energy,liu15,yiran2015,
  ramamonjison2015energy,han2014spectrum}. For instance,
 energy-efficient resource allocation was studied in \cite{ng2012energy1} for  single-cell systems with a large number of base station antennas.
Subsequently, EE maximization problems are further investigated for various practical scenarios such as  relaying systems \cite{sun2013energy},   full duplex communications \cite{liu15}, heterogenous networks \cite{yiran2015}, cognitive radio (CR) networks \cite{ramamonjison2015energy}, coordinated multi-point
(CoMP) transmission \cite{han2014spectrum}, etc.
However, all these previous works do not take into account spectrum sharing or energy cooperation between the SC and the MC, which are expected to enhance the performances of both networks simultaneously.

  In this paper,  we study spectrum-power trading between an SC and an MC where the SC BS consumes additional power to serve MUs while the MC allows the SC BS to operate on some bandwidth of the MC.  
To enable energy-efficient SC via spectrum-power trading, we need to address the following fundamental issues. First, when should the SC serve an MU?  For example,  if the required data rate of an MU is  too stringent but the bandwidth assigned to it is insufficient, it may not be beneficial for the SC to serve that MU.
 Second, how much bandwidth should be obtained and how much power should be utilized in order to achieve the maximum EE as well as guaranteeing the QoS of the MUs? This question naturally arises because if the SC desires to acquire more bandwidth from the MU, it has to transmit a higher transmit power to serve this MU.
 However, this may in turn leave a lower transmit power budget for its own SUs and thereby lead to a lower system date rate as well as an unsatisfactory system EE. Thus, there exists a non-trivial spectrum and power tradeoff in the spectrum-power trading.
 These issues are critical but have not been investigated in previous works   \cite{li2014energy,ng2012energy1,liu15,sun2013energy,
yiran2015,ramamonjison2015energy}, yet, and we will address them in this paper.
\section{System Model}

\subsection{Spectrum-Power Trading Model between SC and MC}

 We consider a spectrum-power trading scenario which consists of an MC and an SC, as depicted in Figure \ref{system_model}. The MC BS aims at offloading the data traffic of some cell edge MUs to the SC BS in order to reduce its own power consumption.  The set of MUs potentially offloaded to the SC is denoted by $\mathcal{K}$ with $|\mathcal{K}|=K$ and the set of  SUs in the SC is denoted by $\mathcal{N}$ with $|\mathcal{N}|=N$, where $|\cdot|$ indicates the cardinality of a set. Each MU and SU have been assigned with a piece of licensed bandwidth by the MC and the SC, respectively, denoted as $W^k_{\mathrm{MC}}$ and $B^n_{\mathrm{SC}}$. 
To simplify the problem,  we assume that the SC BS as well  as each user is equipped with a single-antenna \cite{guo2014joint}. Besides, typical optical fiber with high capacity can be deployed to connect the SC and the MC for data offloading purpose.


The channels between the SC and MUs as well as SUs are assumed to be quasi-static block fading \cite{ng2012energy1}. We also assume that SU $n$, $\forall\, n\in \mathcal{N}$, experiences frequency flat fading\footnote{The current work can be easily extended to the case of frequency selective fading at the expense of a more involved notation.} on its own licensed bandwidth $B^n_{\mathrm{SC}}$ and each MU $k$'s bandwidth $W^k_{\mathrm{MC}}$, respectively. In addition,  MU $k$, $\forall\, k\in \mathcal{K}$, also experiences frequency flat fading on its own licensed bandwidth $W^k_{\mathrm{MC}}$.
It is also assumed that the channel state information (CSI) of all the users is perfectly known at the SC in order to explore the EE upper bound and extract useful design insights of the considered systems.

\begin{figure}[!t]
\centering
\includegraphics[width=3.5in]{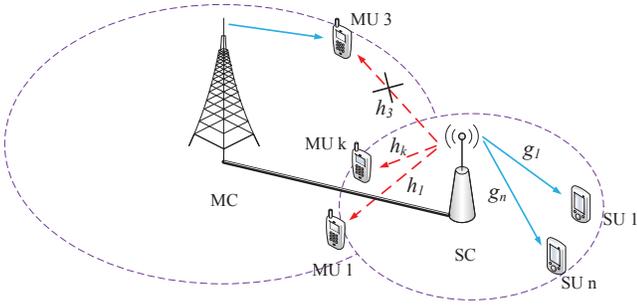}
\caption{The spectrum-power trading model between an SC and an MC.  For example, the SC may agree to serve MU $1$ but refuse to serve MU $3$ in order to maximize its system EE. \label{system_model}}
\end{figure}

For MU $k$, $\forall\, k\in \mathcal{K}$, the channel power gain between the SC  and MU $k$ on its own licensed bandwidth $W_{\mathrm{MC}}^k$ is denoted as $h_k$, cf. Figure \ref{system_model}. The corresponding transmit power and the bandwidth allocated to MU $k$ by the SC are denoted as $q_{k}$ and $w_k$, respectively.
Thus, the achievable data rate of MU $k$ can be expressed as
\begin{align}
r_k=w_k\log_2\left(1+\frac{q_kh_k}{w_kN_0}\right),
\end{align}
where $N_0$ is the spectral density of the additive white Gaussian noise.

For SU $n$, $\forall\, n\in \mathcal{N}$, the channel power gain between the SC  and SU $n$ on its own licensed bandwidth $B^n_{\mathrm{SC}}$ is denoted as $g_{n}$, cf. Figure \ref{system_model}. The corresponding transmit power is denoted as $p_n$. Then, the achievable date rate of SU $n$ on its own bandwidth can be expressed as
\begin{align}
r^n_{\mathrm{SC}}=B^n_{\mathrm{SC}}\log_2\left(1+\frac{p_{n}g_{n}}{B^n_{\mathrm{SC}}N_0}\right).
\end{align}
In addition to $B^n_{\mathrm{SC}}$, each SU may acquire some additional bandwidth from MUs due to the proposed spectrum-power trading between the SC and the MC.
Denote the channel power gain between the SC and SU $n$ on the bandwidth of MU $k$ as $g_{k,n}$. The bandwidth that the SC allocates to SU $n$ from $W^k_{\mathrm{MC}}$ is denoted as $b_{k,n}$ and the corresponding transmit power is denoted as $p_{k,n}$.
 Then, the achievable data rate of SU $n$ on the bandwidth of MU $k$ can be expressed as
\begin{align}
r_{k,n} = b_{k,n} \log_2\left(1+\frac{p_{k,n}g_{k,n}}{b_{k,n}N_0}\right).
\end{align}
Thus, the total data rate of SU $n$ under the proposed spectrum-power trading is given by
\begin{align}
R_n=r^n_{\mathrm{SC}} + \sum_{k=1}^Kx_kr_{k,n},
\end{align}
where $x_k$ is the MU selection binary variable and defined as
\begin{equation}\label{eq5}
x_k=\left\{
\begin{array}{lcl}
1,& \text{if MU $k$ is served by the SC},\\
0,& \text{otherwise}.
\end{array}\right.
\end{equation}
Therefore, the overall system data rate of SUs is expressed as
\begin{align}\label{eq6}
R_{\rm{tot}} =  \sum_{n=1}^{N}R_n = \sum_{n=1}^Nr^n_{\mathrm{SC}} + \sum_{n=1}^N\sum_{k=1}^Kx_kr_{k,n}.
\end{align}

\subsection{Power Consumption Model for SC BS}
 Here, we adopt the power consumption model from \cite{ng2012energy1} in which the overall energy consumption of the SC BS consists of two parts: the dynamic power consumed in the power amplifier for transmission, $P_{{t}}$, and the static power consumed for circuits, $P_\mathrm{c}$.

  The dynamic power consumption is modeled as a linear function of the transmit power that includes both the transmit power consumption for SUs and that for MUs, i.e.,
 \begin{align}\label{eq7}
P_{{t}}=\sum_{n=1}^{N}\frac{p_{n}}{\xi} + \sum_{n=1}^{N}\sum_{k=1}^{K}x_k\frac{p_{k,n}}{\xi}+\sum_{k=1}^{K}x_k\frac{q_{k}}{\xi},
\end{align}
where $\xi\in (0,1]$  is a constant that accounts for the power amplifier (PA) efficiency and the value of  $\xi$ depends on the specific type of the BS PA.
The static power consumption for circuits is denoted as  $P_\mathrm{c}$, which is caused by filters, frequency synthesizers, etc.  Therefore, the overall energy consumption of the SC BS can be expressed as
\begin{align}\label{eq7}
P_{\rm{tot}}=\sum_{n=1}^{N}\frac{p_{n}}{\xi} + \sum_{n=1}^{N}\sum_{k=1}^{K}x_k\frac{p_{k,n}}{\xi}+\sum_{k=1}^{K}x_k\frac{q_{k}}{\xi}+P_\mathrm{c}.
\end{align}

\section{Problem Formulation and Analysis}
    Our goal is to enhance the system EE of the SC via spectrum-power trading while guaranteing the QoS of the MUs as well as the SC network. Thus, the system EE of the SC is defined as the ratio of the total achievable data rate of SUs to the total power consumption that includes not only the power consumed for providing services for SUs, but also the power consumed for spectrum-power trading, i.e., $EE=\frac{R_{\rm{tot}}}{P_{\rm{tot}}}$.
Specifically, we aim to maximize the system EE of the SC via jointly optimizing MU selection, bandwidth allocation, and power allocation.
 Let $\mathcal{S}=\Big\{\{{p}_{n}\}, \{{p}_{k,n}\}, \{{q}_k\}, \{{b}_{k,n}\}, \{{w}_k\}\Big\}$ denote the resource allocation solution. The system EE maximization problem is formulated as:
\begin{align}\label{eq16}
\mathop {\text{max} }\limits_{S} ~& \frac{\sum_{n=1}^Nr^n_{\mathrm{SC}} + \sum_{k=1}^K\sum_{n=1}^Nx_kr_{k,n}}
{\sum_{n=1}^{N}\frac{p_{n}}{\xi} + \sum_{n=1}^{N}\sum_{k=1}^{K}x_k\frac{p_{k,n}}{\xi}+\sum_{k=1}^{K}x_k\frac{q_{k}}{\xi}+P_\mathrm{c}}\nonumber \\
\text{s.t.} ~~~&  \text{C1:}~~\sum_{n=1}^{N}{p_{n}} +  \sum_{n=1}^N\sum_{k=1}^Kp_{k,n}+\sum_{k=1}^{K}q_k\leq P^{\mathrm{SC}}_{\mathop{\max}}, \nonumber \\
& \text{C2:}~~\sum_{n=1}^Nb_{k,n}+w_k\leq x_kW^k_{\mathrm{MC}},  ~\forall\, k\in\mathcal{ K}, \nonumber \\
&\text{C3:}~~w_{k}\log_2\left(1+\frac{q_{k}h_{k}}{w_{k}N_0}\right)\geq x_kR^k_{\mathrm{MC}}, ~\forall\, k\in\mathcal{ K},  \nonumber \\
& \text{C4:}~\sum_{n=1}^Nr^n_{\mathrm{SC}} + \sum_{k=1}^K\sum_{n=1}^Nx_kr_{k,n} \geq R^{\mathrm{SC}}_{\mathop{\min}}, \nonumber\\
& \text{C5:}~x_k\in\{0, 1\}, ~\forall\,  k\in\mathcal{ K},    \nonumber\\
& \text{C6:}~b_{k,n}\geq  0, ~  w_{k}\geq  0, ~\forall\, k\in\mathcal{ K}, n\in\mathcal{N},    \nonumber\\
& \text{C7:}~p_n\geq0,~ p_{k,n}\geq0, ~  q_{k}\geq0, ~\forall\,  k\in\mathcal{ K}, n\in\mathcal{N}.
\end{align}
In problem (\ref{eq16}),  C1 limits the maximum transmit power of the SC BS to $P^{\mathrm{SC}}_{\mathop{\max}}$.
C2 ensures that the bandwidth allocated to SUs and MU $k$ does not exceed the available bandwidth, $W^k_{\mathrm{MC}}$, that has been licensed to MU $k$ by the MC.
In C3, $R^k_{\mathrm{MC}}$ is the minimum data rate requirement of MU $k$.  C4 guarantees the minimum required system data rate of the SC.
C5 indicates whether to serve MU $k$ or not. Note that if $x_k=0$, then from C2 and C4, both $b_{k,n}$ and $q_k$ will be forced to zeros at the optimal solution of problem  (\ref{eq16}). In other words, the SC does not obtain additional bandwidth from MU $k$ and does not serve MU $k$ either. Therefore, the SC only performs spectrum-power trading with the MC if it is beneficial to the EE of the SC.
 C6 and C7 are non-negativity constraints on the bandwidth and power allocation variables, respectively. 

Note that problem (\ref{eq16}) is neither a concave nor a quasi-concave optimization problem due to the fractional form objective function and the binary optimization variables $x_k, \forall\, k$. 
Nevertheless, in the following theorem, we first transform the energy-efficient optimization problem into a simplified one based on its special structure.

\begin{theorem}
The optimal solution of problem (\ref{eq16}) is equivalent to that of the following problem:
\begin{align}\label{eq17}
\mathop {\text{max} }\limits_{S} ~& \frac{\sum_{n=1}^Nr^n_{\mathrm{SC}} + \sum_{k=1}^Kx_kr_{k,k'}}
{\sum_{n=1}^{N}\frac{p_{n}}{\xi} + \sum_{k=1}^{K}x_k\frac{p_{k,k'}}{\xi}+\sum_{k=1}^{K}x_k\frac{q_{k}}{\xi}+P_\mathrm{c}}\nonumber \\
\text{s.t.} ~~&  \text{C5}, ~ \text{C6},~ \text{C7},~  \nonumber \\
& \text{C1:}~ \sum_{n=1}^{N}{p_{n}} + \sum_{k=1}^Kp_{k,k'}+\sum_{k=1}^{K}q_k\leq P^{\mathrm{SC}}_{\mathop{\max}}, \nonumber \\
& \text{C2:}~b_{k,k'}+w_k= x_kW^k_{\mathrm{MC}},  ~\forall\, k\in\mathcal{ K}, \nonumber \\
&\text{C3:}~w_{k}\log_2\left(1+\frac{q_{k}h_{k}}{w_{k}N_0}\right)= x_kR^k_{\mathrm{MC}}, ~\forall\, k\in\mathcal{ K},  \nonumber \\
& \text{C4:}~\sum_{n=1}^Nr^n_{\mathrm{SC}} + \sum_{k=1}^Kx_kr_{k,k'} \geq R^{\mathrm{SC}}_{\mathop{\min}}, 
\end{align}
where $k'=\arg\mathop {\max }\limits_{n \in \mathcal{N}} ~g_{k,n}.$
\end{theorem}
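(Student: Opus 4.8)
The strategy is to prove a two–way reduction. One direction is trivial: problem (\ref{eq17}) is obtained from (\ref{eq16}) by keeping, for each MU $k$, only the variables $b_{k,k'},p_{k,k'}$ and by turning C2, C3 into equalities, so every feasible point of (\ref{eq17}) is feasible for (\ref{eq16}) with the same objective value. Hence it suffices to show that an arbitrary feasible (in particular optimal) solution of (\ref{eq16}) can be modified, without decreasing the objective and without violating C1--C7, into a solution having the structure of (\ref{eq17}). I would do this in three successive modifications, matching the three structural differences between the two problems.

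\emph{Modification 1 (one SU per traded band).} Fix $k$ with $x_k=1$ and let $k'=\arg\max_{n\in\mathcal N}g_{k,n}$. Consider $f(b,p)=b\log_2\!\big(1+\tfrac{gp}{bN_0}\big)$, extended by $f(0,p)=0$; as the perspective of the concave map $p\mapsto\log_2(1+gp/N_0)$ it is jointly concave on $\{b\ge0,\,p\ge0\}$ and positively homogeneous of degree one, hence superadditive, i.e. $\sum_n f(b_{k,n},p_{k,n})\le f\!\big(\sum_n b_{k,n},\sum_n p_{k,n}\big)$ for a common gain $g$. Since $g_{k,k'}\ge g_{k,n}$ for all $n$, raising every gain to $g_{k,k'}$ only increases each summand, so assigning the whole traded bandwidth $\sum_n b_{k,n}$ and the whole power $\sum_n p_{k,n}$ to SU $k'$ alone makes $\sum_n r_{k,n}$ not decrease, and therefore the numerator of the objective does not decrease. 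The total traded bandwidth per MU, the total transmit power, and the left-hand side of C3 are all preserved, so C1--C7 still hold, and the variables $\{b_{k,n}\},\{p_{k,n}\}$ collapse to $\{b_{k,k'}\},\{p_{k,k'}\}$ as in (\ref{eq17}).

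\emph{Modification 2 (C2 tight).} If now $b_{k,k'}+w_k<W^k_{\mathrm{MC}}$ for some served $k$, enlarge $b_{k,k'}$ to absorb the slack while holding $p_{k,k'}$ fixed. Because $\partial f/\partial b=\log_2\!\big(1+\tfrac{gp}{bN_0}\big)-\tfrac{gp/(bN_0)}{(1+gp/(bN_0))\ln 2}\ge0$, the rate $r_{k,k'}$ does not drop, so the numerator does not drop; nothing else in C1--C7 changes, and C2 can be made an equality. \emph{Modification 3 (C3 tight).} If $w_k\log_2\!\big(1+\tfrac{q_kh_k}{w_kN_0}\big)>R^k_{\mathrm{MC}}$ for some served $k$, then since this quantity is continuous and strictly increasing in $q_k$ and vanishes at $q_k=0$, some smaller $q_k$ makes it equal $R^k_{\mathrm{MC}}$; reducing $q_k$ to that value strictly lowers $P_{\rm{tot}}$ while leaving $R_{\rm{tot}}$ and C1, C2, C4 untouched, so the EE does not decrease (and would strictly increase were C3 slack at an optimum). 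After the three modifications the point lies in the feasible set of (\ref{eq17}) with objective value at least the original one; together with the trivial direction this shows the two problems share the same optimal value and that an optimal solution of (\ref{eq17}) is optimal for (\ref{eq16}).

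The only step that needs real care is Modification 1: one must justify the perspective/superadditivity inequality (including the boundary convention $f(0,p)=0$ and the harmlessness of discarding any power paired with zero bandwidth) and verify that substituting the best gain $g_{k,k'}$ for the true gains cannot jeopardize the minimum-rate constraint C4 — which it cannot, since that substitution only increases $R_{\rm{tot}}$. Modifications 2 and 3 rely solely on elementary monotonicity of the Shannon rate function and require no assumptions beyond $h_k>0$, $g_{k,n}>0$, and $R^k_{\mathrm{MC}}\ge0$.
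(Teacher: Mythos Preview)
Your proposal is correct and follows essentially the same approach as the paper's (sketch) proof: both argue that (i) concentrating each traded band and its power on the SU with the best gain $g_{k,k'}$ cannot decrease the numerator for fixed total power, and (ii) C2 and C3 must bind by monotonicity of the objective in $b_{k,k'}$ and $q_k$. Your perspective/superadditivity argument for step (i) simply supplies the rigor the paper omits ``due to page limitation.''
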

\begin{proof}
Due to page limitation, we only provide a sketch of the proof. It can be shown that assigning the bandwidth  obtained from MU $k$ to SU $k'$, where $k'=\arg\mathop {\max }\limits_{n \in \mathcal{N}} ~g_{k,n}$, is able to achieve the maximum data rate for a given transmit power budget. In addition, C2 and C3 are satisfied with equalities at the optimal solution, since the objective function is an increasing function of $b_{k, k'}$ and an decreasing function of $q_k$. Thus, problem (\ref{eq16}) is simplified to problem (\ref{eq17}).
\end{proof}

Theorem $1$ suggests that if the SC decides to serve MU $k$, the most energy-efficient strategy is only to share the bandwidth of MU $k$ with at most one SU who has the largest channel power gain on the traded bandwidth, $W^k_{\mathrm{MC}}$. In addition, constraints C2 and C3 are also met with equalities at the optimal solution since it is always beneficial for the SC to seek as much as bandwidth while consuming as less as transmit power in the spectrum-power trading with the MC.
Although problem (\ref{eq17}) is more tractable than problem  (\ref{eq16}), it is still a combinatorial non-convex optimization problem. In general, there is no efficient method for solving non-convex optimization problems and performing exhaustive search among all the possible cases to find a globally optimal solution may lead to an exponential computational complexity, which is prohibitive in practice. Thus,  we aim to develop a low computational complexity approach via exploiting the special structure of the problem.

\section{Energy-Efficient Resource Allocation For Given MU Selection}
Denote $\Psi$ as a set of MUs that are scheduled by the SC, i.e., $\Psi\triangleq\{k\,|\,x_k=1, k\in \mathcal{K}\}$, and denote $EE_{\Psi}$ as the maximum system EE of problem (\ref{eq17}) based on set $\Psi$, i.e., $EE=EE_{\Psi}$. 
For a given $\Psi$, problem (\ref{eq17}) is reduced to a joint bandwidth and power allocation problem.
However, the reduced problem is still non-convex due to the fractional-form objective function.
In the following, we show that the optimal solution of the reduced problem can be efficiently obtained by exploiting the fractional structure of the objective function in (\ref{eq17}).
\subsection{Problem Transformation}
From the nonlinear fractional programming theory \cite{dinkelbach1967nonlinear}, for a problem of the form,
\begin{equation}\label{eq11c}
q^*= \mathop {\text{max} }\limits_{S\in \mathcal{F}} \frac{R_{\rm{\rm{tot}}}(S)}{P_{\rm{\rm{tot}}}(S)},
\end{equation}
where  $\mathcal{F}$ is the feasible solution set,
there exists an equivalent problem in  subtractive form that satisfies
\begin{equation}\label{eq12c}
T(q^*)=\mathop {\text{max} } \limits_{S\in \mathcal{F}}\Big\{R_{\rm{\rm{tot}}}(S)-q^*P_{\rm{\rm{tot}}}(S)\Big\}=0.
\end{equation}
The equivalence between (\ref{eq11c}) and (\ref{eq12c}) can be easily verified with the corresponding maximum value $q^*$ that is also the maximum system EE. Besides, Dinkelbach
 provides an iterative  method in \cite{dinkelbach1967nonlinear} to obtain  $q^*$.
  By applying this transformation to (\ref{eq17}) with  $b_{k,k'} = W^k_{\mathrm{MC}}- w_k$ and $q_{k}=\left(2^\frac{R^{k}_{\mathrm{MC}}}{w_k}-1\right)\frac{w_{k}N_0}{h_k}, \forall \, k \in \Psi$, we obtain the following optimization problem for a given $q$ in each iteration:
\begin{align}\label{eq13c1}
&\mathop {\text{max} }\limits_{\overset{\{p_{n}\}, \{p_{k,k'}\},}{ \{w_k\}}}~ \sum_{n=1}^NB^n_{\mathrm{SC}}\log_2\left(1+\frac{p_{n}g_{n}}{B^n_{\mathrm{SC}}N_0}\right) \nonumber \\
&+ \sum_{k\in\Psi}(W^k_{\mathrm{MC}}-w_k)\log_2\left(1+\frac{p_{k,k'}g_{k,k'}}{(W^k_{\max}-w_k)N_0}\right) \nonumber\\
&-q\left(\sum_{n=1}^{N}\frac{p_{n}}{\xi}+ \sum_{k\in \Psi}\frac{p_{k,k'}}{\xi}+\sum_{k\in \Psi}\left(2^\frac{R^{k}_{\mathrm{MC}}}{w_k}-1\right)\frac{w_{k}N_0}{\xi h_k}+P_\mathrm{c}\right)\nonumber\\
&\text{s.t.} ~  \text{C4:}~\sum_{n=1}^Nr^n_{\mathrm{SC}} + \sum_{k\in \Psi}r_{k,k'} \geq R^{\mathrm{SC}}_{\mathop{\min}}, ~ \text{C6:}~ w_{k}\geq  0, ~\forall\, k\in \Psi,    \nonumber\\ &\text{C1:}~ \sum_{n=1}^{N}{p_{n}} + \sum_{k\in \Psi}p_{k,k'}+\sum_{k\in \Psi}\left(2^\frac{R^{k}_{\mathrm{MC}}}{w_k}-1\right)\frac{w_{k}N_0}{h_k}\leq P^{\mathrm{SC}}_{\mathop{\max}}, \nonumber \\
& \text{C7:}~p_n\geq0, ~p_{k,k'}\geq0, ~\forall\,  k\in\Psi, n\in\mathcal{N}.
\end{align}
After the transformation, it can be verified that problem (\ref{eq13c1}) is jointly concave with respect to all the optimization variables and also satisfies Slater's constraint qualification \cite{ng2012energy1}.  As a result, the duality gap between problem (\ref{eq13c1}) and its dual problem is zero. Therefore, the optimal solution of problem (\ref{eq13c1}) can be obtained by applying the Lagrange duality theory \cite{ng2012energy1}. In the next section, we derive the optimal bandwidth and power allocation via exploiting the Karush-Kuhn-Tucker (KKT) conditions of problem (\ref{eq13c1}) that leads to a computationally efficient resource allocation algorithm.

\subsection{Joint Bandwidth and Power Allocation}
  Denote $\lambda$ and  $  \mu$ as the non-negative Lagrange multipliers associated with  constraints C1 and C4, respectively.
Then, the optimal bandwidth and power allocation be obtained as in Theorem \ref{theorem70}.
 \begin{theorem}\label{theorem70}
Given  $\lambda$ and ${\mu}$,  the optimal bandwidth and power allocation is given by
\begin{align}
w_{k}&= \min\left(\frac{R^k_{\mathrm{MC}}\ln2}{\mathcal{W}\left(\frac{1}{e}\left(\frac{\mathcal{C}h_k}{(q+\lambda)N_0}-1\right)\right)+1}, W^k_{\mathrm{MC}}\right), \forall \, k \in \Psi,  \label{eq4.22}\\
p_{k,k'}&=(W^k_{\mathrm{MC}}-w_k)\left[\frac{(1+\mu)\xi}{({q}+\lambda\xi)\ln2}-\frac{N_0}{g_{k,k'}}\right]^+, \forall \, k \in \Psi,  \label{eq4.25}\\
p_{n}&=B^n_{\mathrm{SC}}\left[\frac{(1+\mu)\xi}{({q}+\lambda\xi)\ln2}-\frac{N_0}{g_{n}}\right]^+, \forall \, n\in \mathcal{N},  \label{eq4.26}
\end{align}
where $[x]^+\triangleq \max\{x,0\}$ and $\mathcal{W}(x)$ is the Lambert $\mathcal{W}$ function \cite{guo2014joint},  i.e., $x=\mathcal{W}(x)e^{\mathcal{W}(x)}$. In addition, $\mathcal{C}= (1+\mu)\log_2\left(1+{\widetilde{p}}_{k,k'}\frac{{g_{k,k'}}}{N_0}\right)-\left(\frac{q}{\xi}+\lambda\right)\widetilde{p}_k$, and ${\widetilde{p}}_{k,k'}=\left[\frac{(1+\mu)\xi}{({q}+\lambda\xi)\ln2}-\frac{N_0}{g_{k,k'}}\right]^+$.

 \end{theorem}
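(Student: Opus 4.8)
The plan is to characterize the optimum of the concave program (\ref{eq13c1}) through its KKT conditions. As noted just before the theorem, (\ref{eq13c1}) is jointly concave and satisfies Slater's condition, so the KKT conditions are necessary and sufficient; it therefore suffices to exhibit a primal point that is stationary for the Lagrangian associated with the given multipliers $\lambda,\mu\ge 0$. I would introduce $\lambda$ for C1 and $\mu$ for C4, form the Lagrangian $\mathcal{L}$ of the Dinkelbach-parametrized objective, and keep the non-negativity constraints C6 and C7 implicit; their multipliers will reappear only through the operators $[\cdot]^{+}$ and $\min(\cdot,\cdot)$ via complementary slackness.

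The power variables come first. The rate $r^{n}_{\mathrm{SC}}$ enters the objective and C4, hence with total weight $1+\mu$, while $p_{n}$ enters the power terms linearly with weight $q/\xi+\lambda$; thus $\partial\mathcal{L}/\partial p_{n}=0$ is a textbook water-filling equation whose solution, truncated at zero to enforce $p_{n}\ge 0$, is (\ref{eq4.26}). Replacing $g_{n}$ by $g_{k,k'}$ and $B^{n}_{\mathrm{SC}}$ by the effective bandwidth $W^{k}_{\mathrm{MC}}-w_{k}$, the identical computation gives (\ref{eq4.25}); I would record the per-unit-bandwidth water level $\widetilde{p}_{k,k'}$ for the next step.

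The substantive step is stationarity in $w_{k}$, which is delicate because $w_{k}$ appears both inside the exponent of $q_{k}=(2^{R^{k}_{\mathrm{MC}}/w_{k}}-1)\tfrac{w_{k}N_{0}}{h_{k}}$ and, through the coefficient $W^{k}_{\mathrm{MC}}-w_{k}$, inside $r_{k,k'}$. The trick is to simplify $\partial r_{k,k'}/\partial w_{k}$ using the $p_{k,k'}$-stationarity already obtained: substituting $p_{k,k'}=(W^{k}_{\mathrm{MC}}-w_{k})\widetilde{p}_{k,k'}$, that partial derivative collapses to the constant $-\mathcal{C}$ with $\mathcal{C}=(1+\mu)\log_{2}\!\big(1+\widetilde{p}_{k,k'}\tfrac{g_{k,k'}}{N_{0}}\big)-\big(\tfrac{q}{\xi}+\lambda\big)\widetilde{p}_{k,k'}$. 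Differentiating the $q_{k}$ term and writing $u=R^{k}_{\mathrm{MC}}\ln 2/w_{k}$, so that $\tfrac{d}{dw_{k}}\big[(2^{R^{k}_{\mathrm{MC}}/w_{k}}-1)w_{k}\big]=e^{u}(1-u)-1$, the condition $\partial\mathcal{L}/\partial w_{k}=0$ reduces to the transcendental equation $(1-u)e^{u}=1-\mathcal{C}h_{k}/\big((q/\xi+\lambda)N_{0}\big)$. The substitution $v=u-1$ rewrites this as $v e^{v}=\tfrac{1}{e}\big(\tfrac{\mathcal{C}h_{k}}{(q/\xi+\lambda)N_{0}}-1\big)$, which is precisely the defining identity of the Lambert $\mathcal{W}$ function; taking the branch with $v>-1$ (forced by $u>0$) and inverting $u=R^{k}_{\mathrm{MC}}\ln 2/w_{k}$ yields the Lambert-$\mathcal{W}$ expression in (\ref{eq4.22}).

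It remains to respect the feasible range of $w_{k}$: from $b_{k,k'}=W^{k}_{\mathrm{MC}}-w_{k}\ge 0$ one needs $w_{k}\le W^{k}_{\mathrm{MC}}$, and since (\ref{eq13c1}) is concave the constrained optimum is the projection of the unconstrained critical point onto $[0,W^{k}_{\mathrm{MC}}]$; the lower end is never binding because $q_{k}\to\infty$ as $w_{k}\to 0^{+}$, so only clipping from above remains, giving the $\min(\cdot,W^{k}_{\mathrm{MC}})$ in (\ref{eq4.22}). I expect the $w_{k}$ step to be the main obstacle: one must notice that the $p_{k,k'}$-stationarity is exactly what makes $\partial r_{k,k'}/\partial w_{k}$ constant, and then recognize the resulting equation as a Lambert-$\mathcal{W}$ equation via the shift $v=u-1$; the $p_{n}$ and $p_{k,k'}$ equations and the boundary bookkeeping are routine.
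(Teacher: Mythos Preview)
Your proposal is correct and follows exactly the approach the paper indicates: analyzing the KKT conditions of problem (\ref{eq13c1}), with the paper's own proof consisting of a single sentence to that effect. Your derivation supplies the details the paper omits, including the water-filling step for $p_n,p_{k,k'}$ and the Lambert-$\mathcal{W}$ reduction for $w_k$; note incidentally that your $(q/\xi+\lambda)$ in the $w_k$ equation is the consistent value, the $(q+\lambda)$ printed in the theorem appearing to be a typo.
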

 \begin{proof}
The optimal bandwidth and power allocation in (\ref{eq4.22})-(\ref{eq4.26}) can be derived directly by analyzing KKT conditions of problem (\ref{eq13c1}).
\end{proof}

From (\ref{eq4.22}),  it is easy to show that the bandwidth allocated to MU $k$ by the SC, i.e., $w_k$, increases with its minimum required data rate by the MC, $R^{k}_{\mathrm{MC}}$, while decreasing with its channel power gain, $h_k$.  This implies that the SC is able to seek more bandwidth from the MUs who require lower user data rates but are closer to the SC BS, which also coincides with the intuition discussed previously. Furthermore,
we also observe that the optimal transmit power allocations, $p_{k,k'}$ and $p_n$, follow the conventional multi-level water-filling structure due to different bandwidth allocations. In contrast, the optimal transmit power densities, $\frac{p_{k,k'}}{W^k_{\mathrm{MC}}-w_k}$ and $\frac{p_n}{B^n_{\mathrm{SC}}}$, follow the conventional single-level water-filling structure \cite{ng2012energy1}. 
{{\begin{table}[]
\caption{\small{Energy-Efficient Joint Bandwidth and Power Allocation Algorithm}} \label{tab:overall} \centering
\vspace{-0.5cm}
\begin{algorithm}[H]
 \caption{Energy-Efficient Joint Bandwidth and Power Allocation [Dinkelbach method]} 
  \begin{algorithmic}[1]
\STATE \normalsize {  {\bf{Initialize}} the maximum tolerance  $\epsilon\ll 1$ and set $q=1$ with given MU $k$ and SU $k'$;\\ 
\STATE            {\bf{repeat}}  \\
\STATE          ~~~Initialize ${\lambda}$, ${\mu}$; \\
\STATE        ~~~{\bf{repeat}}  \\
\STATE         ~~~~~Obtain $w_k, p_{k,k'}, p_n$ from  (\ref{eq4.22})-(\ref{eq4.26}); \\
\STATE         ~~~~~Obtain $b_{k, k'}, q_k$ from C2 and C3 in (10);\\
\STATE         ~~~~~~~Update dual variables ${\lambda}$ and ${\mu}$ by the ellipsoid method;\\
\STATE        ~~~{\bf{until}} ${\lambda}$ and ${\mu}$ converge;     \\
\STATE         ~~~Update $q = \frac{R_{\rm{\rm{tot}}}({S})}{P_{\rm{\rm{tot}}}({S})}$; \\
\STATE            {\bf{until}}   $\Big(R_{\rm{\rm{tot}}}({S})-qP_{\rm{\rm{tot}}}({S})\Big) \leq \epsilon$. }
\end{algorithmic}
\end{algorithm}
\end{table}}}

The commonly adopted ellipsoid method can be employed iteratively for updating $({\lambda}, {\mu})$ toward the optimal solution with guaranteed convergence \cite{ng2012energy1}.
 A discussion regarding the choice of the initial ellipsoid,  the updating of the ellipsoid, and the stopping criterion for the ellipsoid method can be found in \cite[Section \Rmnum{5}-B]{yu2006dual} and is thus omitted here for brevity.
 Due to the concavity of problem (\ref{eq13c1}), the iterative optimization between $(p_n, p_{k,k'}, w_k)$ and $({\lambda}, {\mu})$ is guaranteed to converge to the optimal solution of (\ref{eq13c1}). The details of the bandwidth and power allocation for a given MU selection are summarized in Algorithm $1$ in Table I.

\section{Energy-Efficient MU Selection}
In this section, we investigate the MU selection problem, i.e., to find the MU set $\Psi$ where $x_k=1,\forall\, k\in\Psi$. 
 \subsection{Trading EE}
  The trading EE of MU $k$, for $k\in \mathcal{K}$, is defined as the total data rate of MU $k$ brought for the SC over the total power consumed by the SC in the spectrum-power trading, i.e.,
\begin{align}
EE_{k}=\frac{b_{k,k'}\log_2\left(1+\frac{p_{k,k'}g_{k,k'}}{b_{k,k'}N_0}\right)}{ \frac{p_{k,k'}}{\xi}+\frac{q_k}{\xi}},
\end{align}
where the numerator, $b_{k,k'}\log_2\left(1+\frac{p_{k,k'}g_{k,k'}}{b_{k,k'}N_0}\right)$, is the additional data rate obtained by the SC via serving MU $k$ and the denominator, $\frac{p_{k,k'}}{\xi}+\frac{q_k}{\xi}$, is the total power consumed for both supporting SU $k'$ and meeting the QoS of MU $k$.  As a result, the trading EE is in fact an evaluation of an MU in terms of the power utilization efficiency and can be regarded as a marginal benefit of the SC in the spectrum-power trading.
Then, the trading EE maximization problem of MU $k$ can be formulated as
 \begin{align}\label{eq20}
\mathop {\text{max} }\limits_{\overset{p_{k,k'}, b_{k,k'},}{ q_k, w_k}} ~~~& EE_{k}=\frac{b_{k,k'}\log_2\left(1+\frac{p_{k,k'}g_{k,k'}}{b_{k,k'}N_0}\right)}{ \frac{p_{k,k'}}{\xi}+\frac{q_k}{\xi}}\nonumber \\
\text{s.t.} ~~~~~~~~
& \text{C2:}~~b_{k,k'}+w_k\leq W^k_{\mathrm{MC}},  \nonumber \\
&\text{C4:}~~w_{k}\log_2\left(1+\frac{q_{k}h_{k}}{w_{k}N_0}\right)\geq R^k_{\mathrm{MC}},  \nonumber \\
& \text{C7:}~~b_{k,k'}\geq  0, ~ w_{k}\geq  0.    
\end{align}
It is worth noting that problem (\ref{eq20}) can be regarded as a special case of problem (\ref{eq16}) where there is only one MU and one SU. Therefore, problem (\ref{eq20}) can be solved similarly by the algorithm proposed in Section III.

 \subsection{Trading EE based MU Selection}
 The key observation of the user trading EE is that both $b_{k,k'}\log_2\left(1+\frac{p_{k,k'}g_{k,k'}}{b_{k,k'}N_0}\right)$ and $\frac{p_{k,k'}}{\xi}+\frac{q_k}{\xi}$ will be removed from the numerator and the denominator of the objective function in problem (\ref{eq17}), respectively,  if MU $k$ is not served by the SC.
 With the user trading EE defined in Section V-A, we now investigate the MU selection conditions for different cases. Recall that $\Psi$ denotes an arbitrary set of  MUs that are scheduled by the SC, i.e., $\Psi\triangleq\{k\,|\,x_k=1, k\in \mathcal{K}\}$, and  $EE^*_{\Psi}$ denotes the maximum system EE of problem (\ref{eq17}), which can be obtained by Algorithm $1$ based on set $\Psi$. Then, we have the following theorem to facilitate the algorithm development.

 \begin{theorem}\label{scheduling}
 For any unscheduled MU $m$, i.e.,  $m\in \mathcal{K}, m\notin \Psi$:
 \begin{enumerate}
 \item in the absence of constraints C1 and C4 in problem (\ref{eq17}),  serving MU $m$ improves the EE of the SC \emph{if and only if} $EE^*_m> EE^*_{\Psi}$;
   \item in the absence of constraint C1  in problem (\ref{eq17}),  serving MU $m$ improves the EE of the SC \emph{if} $EE^*_m> EE^*_{\Psi}$;
   \item in the absence of constraint C4  in problem (\ref{eq17}),  serving MU $m$ improves the EE of the SC \emph{only if} $EE^*_m>EE^*_{\Psi}$.
 \end{enumerate}
\end{theorem}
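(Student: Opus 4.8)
The plan is to reduce the whole statement to a sign test by reusing the Dinkelbach parametrization of Section IV. For a set $\mathcal{A}$ of served MUs define $\Phi_{\mathcal{A}}(q)\triangleq\max_{S}\{R_{\rm{tot}}(S)-qP_{\rm{tot}}(S)\}$, the maximum taken over the feasible set of problem (\ref{eq17}) with $x_k=1\Leftrightarrow k\in\mathcal{A}$ (keeping C2, C3 as equalities, C6, C7 always, and including C1 and/or C4 exactly as in the case at hand), and define $g_m(q)\triangleq\max\{r_{m,m'}-q(p_{m,m'}+q_m)/\xi\}$ over the constraints of the trading problem (\ref{eq20}) for MU $m$ (here $m'=\arg\max_{n}g_{m,n}$). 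As in Dinkelbach's method, each of $\Phi_{\mathcal{A}}$ and $g_m$ is a pointwise maximum of affine functions of $q$ whose $q$-coefficient ($-P_{\rm{tot}}$, resp. $-(p_{m,m'}+q_m)/\xi$) is uniformly negative, since $P_{\rm{c}}>0$ and $R^m_{\mathrm{MC}}>0$ forces $q_m>0$; hence both are continuous and strictly decreasing, with $\Phi_{\mathcal{A}}(EE^*_{\mathcal{A}})=0$ and $g_m(EE^*_m)=0$. Two facts then drive everything: since $\Phi_{\Psi\cup\{m\}}$ is strictly decreasing with unique root $EE^*_{\Psi\cup\{m\}}$, serving MU $m$ strictly improves the EE \emph{iff} $\Phi_{\Psi\cup\{m\}}(EE^*_{\Psi})>0$ (and fails to improve it \emph{iff} $\Phi_{\Psi\cup\{m\}}(EE^*_{\Psi})\le0$); and $g_m(EE^*_{\Psi})>0$ \emph{iff} $EE^*_m>EE^*_{\Psi}$. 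So each part becomes a comparison of $\Phi_{\Psi\cup\{m\}}(EE^*_{\Psi})$ with $g_m(EE^*_{\Psi})$. (One preliminary routine check: $g_m$ is unchanged if the inequalities C2, C4 of (\ref{eq20}) are replaced by the equalities C2, C3 of (\ref{eq17}) for MU $m$, because $r_{m,m'}$ is increasing in $b_{m,m'}$ and the $q_m$ needed to meet the rate floor is increasing in that floor, so both are tight at the optimum.)

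\emph{Part 1 (C1 and C4 both dropped).} With neither coupling constraint present, the objective $R_{\rm{tot}}(S)-qP_{\rm{tot}}(S)$ and the remaining constraints separate across the SU block $\{p_n\}$ and the per-MU blocks $(p_{k,k'},b_{k,k'},w_k)$, so the maximization splits and $\Phi_{\Psi\cup\{m\}}(q)=\Phi_{\Psi}(q)+g_m(q)$ for all $q$. Evaluating at $q=EE^*_{\Psi}$, where $\Phi_{\Psi}$ vanishes, gives $\Phi_{\Psi\cup\{m\}}(EE^*_{\Psi})=g_m(EE^*_{\Psi})$, and the two sign facts above turn this into the claimed ``iff''.

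\emph{Part 2 (only C1 dropped).} Only the ``if'' direction is needed. Let $S^*_{\Psi}$ be the optimal allocation of the $\Psi$-problem, and adjoin to it, on MU $m$'s band, the allocation attaining $g_m(EE^*_{\Psi})$. Serving $m$ only adds a nonnegative rate, so C4 ($R_{\rm{tot}}\ge R^{\mathrm{SC}}_{\min}$) still holds, and C1 is absent, so this point is feasible for the $\Psi\cup\{m\}$-problem; the objective value $R_{\rm{tot}}-EE^*_{\Psi}P_{\rm{tot}}$ at it equals $\big(R_{\rm{tot}}(S^*_{\Psi})-EE^*_{\Psi}P_{\rm{tot}}(S^*_{\Psi})\big)+g_m(EE^*_{\Psi})=0+g_m(EE^*_{\Psi})$. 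Hence $\Phi_{\Psi\cup\{m\}}(EE^*_{\Psi})\ge g_m(EE^*_{\Psi})$, which is positive whenever $EE^*_m>EE^*_{\Psi}$, so the EE strictly improves. (Equivalently, this is the ``mediant'' inequality $\frac{R^*_{\Psi}+a_m}{P^*_{\Psi}+b_m}>\frac{R^*_{\Psi}}{P^*_{\Psi}}$ when $a_m/b_m=EE^*_m>EE^*_{\Psi}$, with $a_m,b_m$ the rate and power that MU $m$ contributes.)

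\emph{Part 3 (only C4 dropped).} I prove the contrapositive of ``only if'': $EE^*_m\le EE^*_{\Psi}$ implies $\Phi_{\Psi\cup\{m\}}(EE^*_{\Psi})\le0$. Take any $S'$ feasible for the $\Psi\cup\{m\}$-problem. Its power spent on the $\{p_n\}$ and $k\in\Psi$ blocks is at most $P^{\mathrm{SC}}_{\max}$ (only smaller than the full budget, because serving $m$ uses some of it), so the restriction of $S'$ to those blocks is feasible for the $\Psi$-problem, and its contribution to $R_{\rm{tot}}(S')-EE^*_{\Psi}P_{\rm{tot}}(S')$ is $\le\Phi_{\Psi}(EE^*_{\Psi})=0$; the MU-$m$ blocks of $S'$ are feasible for the trading problem, so their contribution is $\le g_m(EE^*_{\Psi})\le0$ by hypothesis. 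Summing, $R_{\rm{tot}}(S')-EE^*_{\Psi}P_{\rm{tot}}(S')\le0$ for every feasible $S'$, i.e. $\Phi_{\Psi\cup\{m\}}(EE^*_{\Psi})\le0$, so serving $m$ cannot improve the EE. The main obstacle throughout is precisely this bookkeeping of the two coupling constraints: C4 couples the rates, which is why in Part 2 the adjoined MU-$m$ rate can never violate the minimum-rate floor (this kills ``only if'' but preserves ``if''), while C1 couples the powers, which is why in Part 3 the budget left after serving $m$ is still a legitimate, merely smaller, budget for the $\Psi$-problem (this kills ``if'' but preserves ``only if''); the remainder is routine manipulation of pointwise maxima of affine, strictly decreasing functions.
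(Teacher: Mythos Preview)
Your argument is correct. The route, however, differs from the paper's. The paper works directly with the ratios: it writes $EE^*_{\Psi\cup\{m\}}$ as a single fraction, bounds it below by substituting the separately optimal $\Psi$-allocation and the separately optimal MU-$m$ trading allocation (feasibility of this substitution is exactly where the presence or absence of C1/C4 enters), and then invokes a mediant-type lemma, $\frac{a+c}{b+d}\ge\min\{a/b,\,c/d\}$, to split the combined fraction into $\min\{EE^*_{\Psi},EE^*_m\}$. The ``only if'' direction and the two one-sided statements are handled by the analogous reverse manipulation. You instead linearize via Dinkelbach, reducing the question to the sign of $\Phi_{\Psi\cup\{m\}}(EE^*_{\Psi})$ and comparing it with $g_m(EE^*_{\Psi})$; the three cases become, respectively, an exact additive decomposition, a feasible-point lower bound, and a pointwise upper bound over all feasible $S'$. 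The two approaches are equivalent in content (you even note the mediant reading in Part~2), but yours makes the bookkeeping of the two coupling constraints more transparent: C4 is a rate floor, so adjoining MU~$m$'s block can never violate it, which is why ``if'' survives in Part~2; C1 is a shared power cap, so restricting a $\Psi\cup\{m\}$-feasible point to the $\Psi$-blocks only loosens it, which is why ``only if'' survives in Part~3. The paper's version is terser and leans on the mediant lemma as a black box; yours is slightly longer but self-contained and explains \emph{why} each constraint destroys exactly one implication.
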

\begin{proof}
  {Let $\mathcal{S^*}=\Big\{\{{p}^*_{n}\}, \{{p}^*_{k,k'}\}, \{{q}^*_k\}, \{{b}^*_{k,k'}\}, \{{w}^*_k\}\Big\}$ denote the optimal solution of problem (\ref{eq20}) and its corresponding user EE is denoted as $EE_k$. Let $\mathcal{\widehat{S}}=\Big\{\{{p}^*_{n}\}, \{\widehat{p}_{k,k'}\}, \{\widehat{q}_k\},$  $\{\widehat{b}_{k,k'}\}, \{\widehat{w}_k\}\Big\}$ and $\mathcal{\widetilde{S}}=\Big\{\{\widetilde{{p}}_{n}\}, \{\widetilde{p}_{k,k'}\}, \{\widetilde{q}_k\}, \{\widetilde{b}_{k,k'}\},\{\widetilde{w}_k\}\Big\}$ denote the optimal solutions of problem (\ref{eq16}) with $x_k=1$ for $k\in \Psi$ and  $k\in \Psi\bigcup \{m\}$, respectively,  where $m\notin \Psi$.} The corresponding system EEs are denoted as $EE^*_{\Psi}$ and $EE^*_{\Psi\bigcup \{m\}}$, respectively.
  Let $R(\widehat{S})\triangleq \sum_{n=1}^Nr^n_{\mathrm{SC}}(\widehat{p}_{n})+\sum_{k\neq m}r_{k,k'}(\widehat{b}_{k,k'},\widehat{p}_{k,k'})$ ,
  $P(\widehat{S})\triangleq \sum_{n=1}^N\frac{\widehat{p}_n}{\xi} + \sum_{k\neq m}\frac{\widehat{p}_{k,k'}}{\xi}+\sum_{k\neq m}\frac{\widehat{q}_k}{\xi}$,
  $R(\widetilde{S})\triangleq \sum_{n=1}^Nr^n_{\mathrm{SC}}(\widetilde{p}_{n})+\sum_{k\neq m}r_{k,k'}(\widetilde{b}_{k,k'},\widetilde{p}_{k,k'})$, and
  $P(\widetilde{S})\triangleq \sum_{n=1}^N\frac{\widetilde{p}_n}{\xi} + \sum_{k\neq m}\frac{\widetilde{p}_{k,k'}}{\xi}+\sum_{k\neq m}\frac{\widetilde{q}_k}{\xi}$.
Then, we have the following
\begin{align}\label{eq39}
EE^*_{\Psi\bigcup \{m\}}
&= \frac{R(\widetilde{S}) + r_{m,m'}(\widetilde{b}_{m,m'},\widetilde{p}_{m,m'})}
{P(\widetilde{S})+P_\mathrm{c} + \frac{\widetilde{p}_{m,m'}}{\xi}+ \frac{\widetilde{q}_m}{\xi}} \nonumber  \\
&\overset{(a)} \geq \frac{R(\widehat{S}) + r_{m,m'}({b}^*_{m,m'},{p}^*_{m,m'})}
{P(\widehat{S})+P_\mathrm{c} + \frac{{p}^*_{m,m'}}{\xi}+ \frac{{q}^*_m}{\xi}}  \nonumber  \\
&\overset{(b)} \geq \min\left\{\frac{R(\widehat{S}) }
{P(\widehat{S})+P_\mathrm{c}},
\frac{ r_{m,m'}({b}^*_{m,m'},{p}^*_{m,m'})}{\frac{p^*_{m,m'}}{\xi}+ \frac{q^*_m}{\xi}} \right\}    \nonumber \\
&= \min\left\{EE^*_{\Psi}, EE^*_{m}\right\},
\end{align}
where inequality $(a)$ holds due to the fact that $\widetilde{S}$ is the optimal solution of problem (\ref{eq16}) with $x_k=1$ for $k\in \Psi\bigcup \{m\}$. Inequality  $(b)$ holds due to  Lemma 1 and the equality ``='' holds only when $EE^*_{\Psi}= EE^*_{m}$. Thus, we can conclude $EE^*_{m} > EE^*_{\Psi}$ $\Longrightarrow$ $EE^*_{\Psi\bigcup \{m\}} > EE^*_{\Psi}$, which completes the proof of the ``if'' part. The ``only if" part can be proved similarly by exploiting the fractional structure as in (\ref{eq39}). Statements 2) and 3) can be readily obtained by analyzing $(a)$ and $(b)$ for the case of equality.
\end{proof}

Theorem \ref{scheduling} reveals the relationship between the inequality $EE^*_k> EE^*_{\Psi}$ and the MU selection under different constraints in problem  (\ref{eq17}).  
 Since constraints C1 and C4  may not be met with equalities simultaneously at the optimal solution in most cases, condition $EE^*_k> EE^*_{\Psi}$ is either sufficient or necessary for serving MU $k$ in practice.
  It is also interesting to mention that $EE^*_k> EE^*_{\Psi}$ has an important practical interpretation: the SC performs spectrum-power trading with MU $k$ when the trading EE is higher than the current system EE of the SC.
    In other words, the spectrum-power trading with this MU enables the SC to have a better utilization of the power. Otherwise, the spectrum-power trading is only beneficial to the MC and does not bring any benefit for the SC.

The main implication of Theorem \ref{scheduling} is that an MU with a higher user trading EE is more likely to be scheduled by the SC.
Based on this insight, a computationally efficient MU selection scheme is designed as follows. First, we sort all the MUs in descending order according to the user trading EE. Second, for MU $k$ satisfying the condition  $EE^*_k> EE^*_{\Psi}$ in Theorem \ref{scheduling}, we set $x_k=1$ and maximize the system EE in problem (\ref{eq17}) by Algorithm $1$. Third, by comparing the updated system EE with previous system EE where $x_k=0$ holds, we decide whether to schedule MU $k$.  The details of the MU selection procedure is summarized in Algorithm $2$ in Table II.

{{\begin{table}[]
\caption{\small{Energy-Efficient Spectrum-Power Trading Algorithm}} \label{tab:overall} \centering
\vspace{-0.5cm}
\begin{algorithm}[H]
 \caption{Energy-Efficient Spectrum-Power Trading} 
  \begin{algorithmic}[1]
\STATE \normalsize {  Obtain $EE_k$, $\forall\, k$, by solving problem (\ref{eq20}); \\
\STATE Sort all the MUs in descending order of trading EE, i.e., $EE_{1}^{*} > EE_{2}^{*}>,...,> EE_{K}^{*}$; \\
\STATE Set  $\Psi={\O}$ and obtain $EE_{\Psi}^{*}$ by Algorithm 1; \\  \STATE {\bf{for}}  $k=1:K$\\
\STATE       ~~~ Obtain $EE^*_{\Psi \bigcup \{k\}}$ by Algorithm 1; \\
\STATE      ~~~ {\bf{if}}    $EE^*_{\Psi \bigcup \{k\}} > EE_{\Psi}^{*}$    \\
\STATE        ~~~~~~ $\Psi={\Psi\bigcup \{k\}}$;           \\    \STATE      ~~~ {\bf{end}} \\
\STATE {\bf{end} }}
\end{algorithmic}
\end{algorithm}
\end{table}}}

The computational complexity of Algorithm $2$ can be evaluated as follows. First, the computational complexity for obtaining bandwidth and power allocation variables in Algorithm $1$ increases linearly with the number of MUs and the number of SUs, i.e., $\mathcal{O}(K+N)$.  Second, the computational complexities of the ellipsoid method for updating dual variables and the Dinkelbach method for updating $q$ are both independent of $K$ and $N$\cite{Boyd,ng2012energy1}. Finally, the complexity of performing the MU selection linearly increases with $K$. Therefore, the total computational complexity of Algorithm $2$ is $\mathcal{O}\big(K(K+N)\big)$.

\vspace{-0.2cm}
\section{Numerical Results }
\begin{table}[!t]
\centering
\caption{\label{table2} \small{Simulation Parameters.}} \label{parameters}
\renewcommand\arraystretch{1.1}
\begin{tabular}{|c|c|}
\hline
{Parameter} & {Description} \\
\hline

Maximum transmit power of the SC, $P^{\mathrm{SC}}_{\max}$&      $30$  dBm       \\
\hline
Licensed bandwidth of each MU, $W^k_{\mathrm{MC}}$  &      $240$  kHz\\
\hline
Licensed bandwidth of each SU, $B^{n}_{\mathrm{SC}}$ &      $180$  kHz\\
\hline
Static circuit power of the SC, $P_\mathrm{c}$ &    $2$ W \\
\hline
Power spectral density of thermal noise  &    $-174$ dBm/Hz \\
\hline
Power amplifier efficiency, $\xi$ &    $0.38$    \\
\hline
Path loss model  &    $128.1 +37.6\log_{10}d$    \\
\hline
Lognormal shadowing  &    $8$ dB    \\
\hline
Penetration loss  &    $20$ dB    \\
\hline
Fading distribution  &    Rayleigh fading    \\
\hline
\end{tabular}
\end{table}

 We consider a two-tier heterogeneous network where there exists an MC and an SC with the coverage radii of $500$ m and  $50$ m, respectively \cite{ngo2014joint}. Five SUs are uniformly distributed within the coverage of the SC BS while five MUs are uniformly distributed within the distances of $[20 \,200]$ m away from the SC BS.
  The distance between the SC BS and the MC BS is set to $500$ m.  Without loss of generality, we assume that all SUs and MUs have identical parameters. Unless specified otherwise, the important parameters are listed in Table \ref{parameters} and $R_{\mathrm{SC}}$ and $R^k_{\mathrm{MC}}$ are set to be $1000$ kbits and $700$ kbits, respectively.
\subsection{Convergence of Algorithm $2$}
 \begin{figure}[!t]
\centering
\includegraphics[width=3.5in]{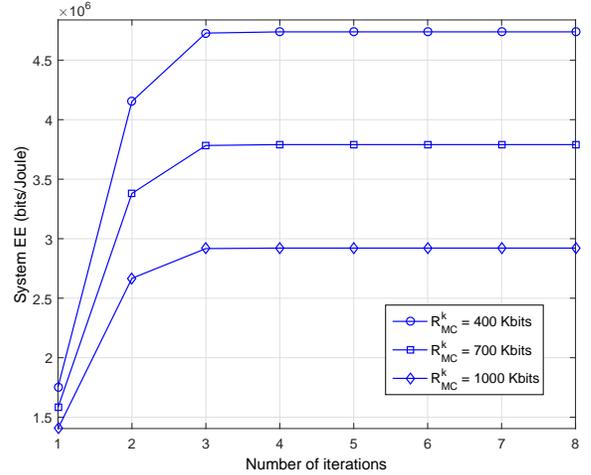}  
\caption{ System EE (bits/Joule) versus the number of  iterations in outer-layer of Algorithm $1$ for different  $R_{\mathrm{MC}}^k$. }\label{fig1}
\end{figure}
As mentioned, Algorithm $2$  is composed of two steps: MU selection and then joint bandwidth and power allocation. It has been shown  in Section V that the MU selection scheme is mainly based on the linear search of the trading EE order and is thus guaranteed to converge within at most $K$ iterations. As a result,  we only need to show the convergence of joint bandwidth and power allocation algorithm in Section IV, i.e., Algorithm $1$. Since the MU selection does not affect the convergence of Algorithm $1$, we set $x_k=1$, $\forall\,k$, to study the convergence.
Figure \ref{fig1} depicts the achieved system EE versus the number of  iterations of Dinkelbach method in Algorithm $1$. As can be observed, at most six iterations are needed on average to reach the optimal solution of the outer-layer problem. In addition,  the Lagrangian duality approach for the joint bandwidth and power allocation in the inner-layer problem also converges to the optimal solution due to the convexity of the problem (\ref{eq13c1})  \cite{Boyd}. Therefore,  Algorithm $2$ is guaranteed to converge.

\begin{figure}[!t]
\centering
\includegraphics[width=3.5in]{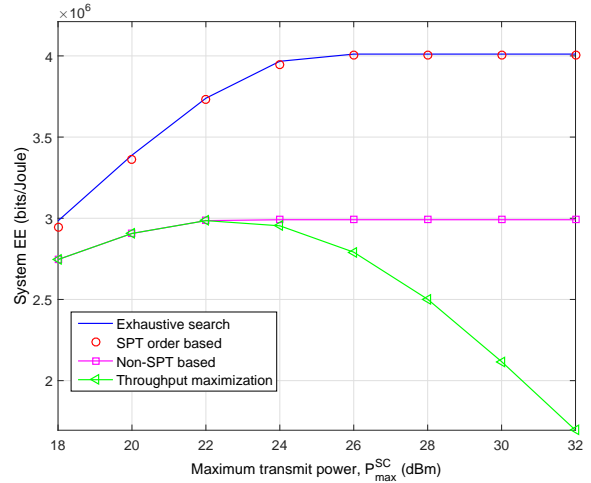}
\caption{System EE (bits/Joule) versus the maximum allowed transmit power of the SC.}\label{fig2}
\end{figure}



\subsection{System EE versus Maximum Transmit Power of SC}
In Figure \ref{fig2}, we compare the achieved system EE  of the following schemes: 1) Exhaustive search \cite{Boyd}; 2) spectrum-power trading (SPT) order based: Algorithm $2$ in Section V; 3) Non-SPT based: the EE maximization without spectrum-power trading \cite{ng2012energy2}; 4) Throughput Maximization: conventional spectral efficiency maximization \cite{ngo2014joint}.  It is observed that the proposed Algorithm $2$ achieves a near-optimal performance and outperforms all other suboptimal schemes, which demonstrates the effectiveness of the proposed scheme.
We also observe that the EEs of the SPT order based scheme and the non-SPT based scheme first increases and then remain constants as $P^{\mathrm{SC}}_{\max}$ increases. In contrast, the EE of the throughput maximization scheme first increases and then decreases with increasing $P^{\mathrm{SC}}_{\max}$, which is due to its greedy use of the transmit power. In addition, it is also seen that the performance gap between the  SPT order based scheme and the non-SPT based scheme first increases and then approaches a constant. This is because when the transmit power of the SC is limited, e.g.  $P^{\mathrm{SC}}_{\max}\leq20$ dBm, the SC may not have sufficient transmit power to serve many MUs and thereby the spectrum-power trading is less likely to be realized, which in return limits its own performance improvement. As $P^{\mathrm{SC}}_{\max}$ increases, compared to the non-SPT based scheme, the SC not only has more transmit power to improve its EE via serving its own SUs, but also has more transmit power to obtain additional bandwidth from the MC via spectrum-power trading, which thereby strengthens the effect of performance improvement. Finally, when all the `good' MUs with higher trading EE are being scheduled by the SC, then the system EE improves with $P^{\mathrm{SC}}_{\max}$ with diminishing return and eventually approaches a constant due to the same reason as that of the non-SPT based scheme.
\begin{figure}[!t]
\centering
\includegraphics[width=3.5in]{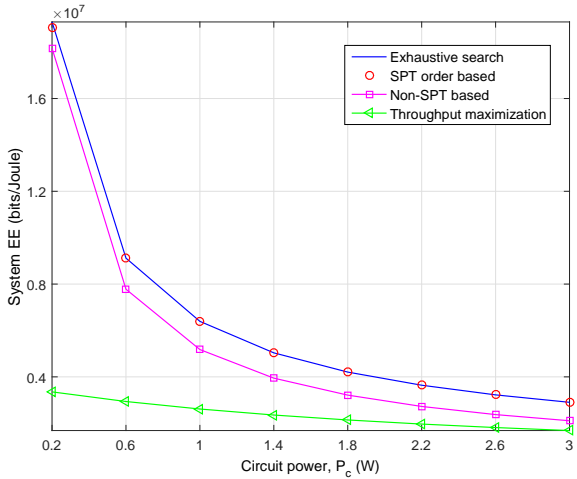}
\caption{System EE versus the circuit power of the SC.}\label{fig3}
\end{figure}
\vspace{-0.3cm}
  \subsection{System EE versus Circuit Power of  SC }
  Figure \ref{fig3} illustrates the performance of all the schemes as a function of the circuit power consumption of the SC. We can observe that the system EE of the all the schemes decreases with an increasing $P_\mathrm{c}$ since the circuit power consumption is always detrimental to the system EE. Also, the proposed Algorithm $2$ performs almost the same as the exhaustive search in all the considered scenarios.
  In addition, the performance gap between the non-SPT scheme and the throughput maximization scheme decreases with an increasing $P_{{\mathrm{c}}}$. This is because
 as  $P_{c}$ increases, the circuit power consumption dominates the total power consumption rather than the transmit power consumption. Thus, improving the system EE is almost equivalent to improving the system data rate, which only results in marginal performance gap \cite{ramamonjison2015energy}.

 However, it is interesting to note that the performance gap between the SPT order based scheme and the non-SPT based scheme does not decrease but increases when $P_\mathrm{c}$ is in a relatively small regime, e.g. $P_\mathrm{c} \in [0.2 ~ 1]$ W. This is because when $P_\mathrm{c}$ is very small, the SC system itself enjoys a high system EE which leaves it a less incentive to perform spectrum-power trading with the MC. Thus, the system EE of the SPT order based scheme  decreases with a similar slope with that of the non-SPT based scheme. As $P_\mathrm{c}$ increases, the system EE of the SC further decreases, which would motivate the SC to perform spectrum-power trading. As a result, the performance degradation caused by an increasing $P_\mathrm{c}$ is relieved by the spectrum-power trading in the proposed scheme, which thereby yields an increased performance gap between these two schemes in small $P_\mathrm{c}$ regime.  Furthermore, when $P_\mathrm{c}$ is sufficiently large such that all the `good' MUs are being selected, the performance gap between these two schemes decreases again due to the the domination of the circuit power in the total power consumption.

\section{Conclusions}
In this paper, we investigated the spectrum-power trading between an SC and an MC to improve the system EE of the SC. Specifically, MU selection, bandwidth allocation, and power allocation were jointly optimized while guaranteeing the QoS of both networks.
 {Simulation results showed that the proposed algorithm obtained a close-to-optimal performance and  also demonstrated the performance gains achieved by the proposed spectrum-power trading scheme for both the SC and the MC.

\bibliographystyle{IEEEtran}
\bibliography{IEEEabrv,mybib}

\end{document}